\newcommand{\ket}[1]{|#1\rangle}
\newcommand{\braket}[2]{\langle#1|#2\rangle}
\newtheorem{theorem}{Theorem}[section]
\newtheorem{lemma}[theorem]{Lemma}
\newtheorem{conjecture}[theorem]{Conjecture}
\newtheorem{definition}[theorem]{Definition}
\renewcommand{\epsilon}{\varepsilon}
\newcommand{\cbra}[1]{\left\{#1\right\}}
\newcommand{\rbra}[1]{\left(#1\right)}
\renewcommand{\sp}{\mathsf{span}}
\newcommand{\mathify}[1]{\ifmmode{#1}\else\mbox{$#1$}\fi}
\newcommand{\zone}{\{0, 1\}}
\newcommand{\ndeg}{\mathsf{ndeg}}
\renewcommand{\deg}{\mathsf{deg}}
\DeclarePairedDelimiter\abs{\lvert}{\rvert}%
\let\oldabs\abs
\def\abs{\@ifstar{\oldabs}{\oldabs*}}
\newcommand{\AND}{\mathsf{AND}}
\newcommand{\EXACT}{\mathsf{EXACT}}
\newcommand{\OR}{\mathsf{OR}}
\renewcommand{\TH}{\mathsf{TH}}
\newcommand{\MOD}{\mathsf{MOD}}
\DeclareMathOperator{\QFT}{QFT}
\newcommand{\A}{\mathcal{A}}
\newcommand{\T}{\mathcal{T}}
\newcommand{\R}{\mathbb{R}}
\newcommand{\C}{\mathbb{C}}
\DeclareMathOperator*{\E}{\mathbb{E}}
\newcommand{\QE}{\mathsf{Q}_\mathsf{E}}
\newcommand{\QN}{\mathsf{Q}_\mathsf{N}}
\title{Exact quantum query complexity of computing\\
Hamming weight modulo powers of two and three}
\author{Arjan Cornelissen\thanks{Institute for Logic, Language, and Computation, University of Amsterdam and QuSoft. {\tt arjan@cwi.nl}}
\and
Nikhil S.~Mande\thanks{CWI, Amsterdam. Supported by the Dutch Research Council (NWO) through QuantERA ERA-NET Cofund project QuantAlgo (project number 680-91-034). {\tt Nikhil.Mande@cwi.nl}}
\and 
Maris Ozols\thanks{Institute for Logic, Language, and Computation, Korteweg-de Vries Institute for Mathematics, and
Institute for Theoretical Physics, University of Amsterdam and QuSoft. Supported by an NWO Vidi grant (Project No. VI.Vidi.192.109). {\tt marozols@gmail.com}}
\and 
Ronald de Wolf\thanks{QuSoft, CWI and University of Amsterdam, the Netherlands. Partially supported by the Dutch Research Council (NWO/OCW), as part of the Quantum Software Consortium programme (project number 024.003.037), and through QuantERA ERA-NET Cofund project QuantAlgo (680-91-034). {\tt rdewolf@cwi.nl}}
}
\date{}
\begin{document}

\maketitle

\begin{abstract}
We study the problem of computing the Hamming weight of an $n$-bit string modulo $m$, for any positive integer $m \leq n$ whose only prime factors are 2 and 3. We show that the exact quantum query complexity of this problem is $\left\lceil n(1 - 1/m) \right\rceil$.
The upper bound is via an iterative query algorithm whose core components are:
\begin{itemize}
    \item the well-known 1-query quantum algorithm (essentially due to Deutsch) to compute the Hamming weight a 2-bit string mod~2 (i.e., the parity of the input bits), and 
    \item a new 2-query quantum algorithm to compute the Hamming weight of a 3-bit string mod~3.
\end{itemize}
We show a matching lower bound (in fact for arbitrary moduli~$m$) via a variant of the polynomial method [de~Wolf, SIAM J.~Comput., 32(3), 2003]. This bound is for the weaker task 
of deciding whether or not a given $n$-bit input has Hamming weight 0 modulo $m$, and it holds even in the stronger \emph{non-deterministic} quantum query model where an algorithm must have positive acceptance probability iff its input evaluates to~1.
For $m>2$ our lower bound exceeds $n/2$, beating the best lower bound provable using the general polynomial method [Theorem 4.3, Beals et al., J.~ACM 48(4), 2001].
\end{abstract}


\section{Introduction}

Query complexity considers the number of queries to input variables needed to compute some function of the input. \emph{Quantum} query complexity has been well studied over the last few decades, and has been the source of many new quantum algorithms~\cite{deutsch&jozsa,simon:power,shor:factoring,grover:search,ambainis:edj,mnrs:searchwalk}.
An important special case is where the function we want to compute of input~$x\in\zone^n$ only depends on the Hamming weight~$|x|$, i.e., the number of 1s in~$x$. Such functions are called \emph{symmetric}, and well-studied examples of symmetric Boolean-valued functions include AND, OR, Majority, Parity, etc. The quantum query complexity (allowing small error probability) of all such functions was tightly characterized in~\cite{wolf:degreesymmf}.
Another important class of symmetric problems is \emph{approximate counting}, where we try to approximate the Hamming weight~$|x|$ itself rather than computing some Boolean property of~$|x|$. Note that this essentially corresponds to computing the most significant bits of $|x|$. For example, if $n$ is a power of~2, then approximating $|x|$ up to additive error $n/1024$ corresponds to computing the 10 most significant bits of~$|x|$.

We could also consider computing the \emph{least} significant bits of~$|x|$. For example, the $k$ least significant bits of $|x|$ correspond to $|x|$ modulo $2^k$ ($k=1$ would be Parity).
More generally, in this paper we study the quantum query complexity of computing the Hamming weight of an $n$-bit string modulo a fixed integer~$m$, given quantum query access to its bits. We consider the \emph{exact} setting where an algorithm is required to output the correct answer with certainty. It is easy to see by an adversarial argument that a classical exact (i.e., deterministic) algorithm needs to make $n$ queries, but the quantum query complexity of computing $|x|$ modulo~$m$ is much more subtle.

We give a quantum algorithm with query complexity $\left\lceil n(1 - 1/m) \right\rceil$ for all $m$ whose only prime factors are 2 and 3, and prove a perfectly matching lower bound.
In particular, our lower bound holds for all integers $m \geq 2$ (not just the ones with prime factors 2 and 3), and against the Boolean function $f: \zone^n \to \zone$ defined by $f(x) = 1$ iff $|x| \equiv 0$ modulo $m$ (clearly, $|x|$ modulo $m$ is at least as hard to compute as this $f$). Moreover, our lower bound holds in the weaker non-deterministic model~\cite{wolf:nqj}, where an algorithm is only required to output 1 with strictly positive probability iff $x$ is a 1-input. 

These functions against which our lower bound holds are examples of total symmetric Boolean functions, that is, they are defined on all inputs in $\zone^n$, and the function value only depends on the Hamming weight of the input. The exact quantum query complexity has been tightly characterized only for very few total symmetric functions. Beals et al.~\cite{bbcmw:polynomialsj} introduced the polynomial method to prove quantum query complexity lower bounds. Using this they showed that the exact quantum query complexity of the Parity function on $n$ variables equals $\left\lceil n/2 \right\rceil$ (the same lower bound for Parity was obtained independently around the same time in~\cite{fggs:parity}), and that the exact complexity of $\OR_n$ and $\AND_n$ equals $n$. They also showed using a result of von zur Gathen and Roche~\cite{GR97} that the exact quantum query complexity of any non-constant symmetric Boolean function on $n$ variables is at least $n/2 - O(n^{0.548})$. Ambainis, Iraids and Smotrovs~\cite{AIS13} showed a tight bound on the exact quantum query complexity of $\EXACT_k$ (the Boolean function that outputs 1 iff the Hamming weight of the input equals $k$) for all $k$, and $\TH_k$ (the Boolean threshold function that outputs 1 iff the Hamming weight of the input is at least $k$) for all $k$. Ambainis, Gruska and Zheng~\cite{AGZ15} showed that a function on $n$ input variables has exact quantum query complexity $n$ iff it is equal to $\AND_n$ up to negations and permutations of the input variables and negation of the output. More recently, Ambainis, Iraids and Nagaj~\cite{AIN17} showed a tight bound for the exact quantum query complexity of $\EXACT_{k,\ell}$ (the Boolean function that outputs 1 iff the Hamming weight of the input equals $k$ or $\ell$) for all $k, \ell$.

\subsection{Our results}\label{sec: our results}

The following is our main upper bound.
\begin{theorem}\label{thm: intro main upper bound}
Let $n$ be a positive integer and let $m > 1$ be an integer such that $m = 2^i3^j$ for some integers $i, j$. Then there is an exact quantum query algorithm that determines the Hamming weight of an $n$-bit string modulo $m$ by querying it at most $\left\lceil n\rbra{1 - \frac{1}{m}}\right\rceil$ times.
\end{theorem}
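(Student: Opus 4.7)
The plan is to reduce the theorem to a ``block lemma'': for every $m = 2^i 3^j$, there is an exact quantum algorithm computing $|y| \bmod m$ on an $m$-bit input using exactly $m - 1$ queries. Granting the block lemma, I would obtain the theorem by writing $n = qm + r$ with $0 \le r < m$, applying the block algorithm to each of the $q$ full blocks (at $m - 1$ queries per block), querying the $r$ remaining bits individually, and classically summing the $q + 1$ mod-$m$ outputs. The total cost is $q(m - 1) + r$, which a short case split on whether $r = 0$ shows equals $\lceil n (1 - 1/m) \rceil$.

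I prove the block lemma by induction on $i + j$. The two base cases match the ``core'' subroutines listed in the theorem preamble: Deutsch's 1-query parity algorithm handles $(i, j) = (1, 0)$, and the new 2-query mod-$3$ algorithm handles $(i, j) = (0, 1)$. For the inductive step with $m = 2m'$, I split the $2m'$ bits into a left half $y_L$ and a right half $y_R$ of $m'$ bits each and run the block algorithm $B_{m'}$ (guaranteed by induction) on each, costing $2(m' - 1)$ queries. This yields $|y_L| \bmod m'$ and $|y_R| \bmod m'$; since $|y_L|, |y_R| \in \{0, \dots, m'\}$, these mod values determine $|y_L|$ and $|y_R|$ uniquely \emph{except} when the mod value is $0$, in which case the corresponding ambiguity bit $b \in \{0, 1\}$ (telling us whether $|y_L|$ or $|y_R|$ equals $0$ or $m'$) remains undetermined. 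One extra query always suffices to determine $|y| \bmod 2m'$: either directly query one bit of the ambiguous half (if only one half is ambiguous), or run Deutsch on one bit from each half (if both are ambiguous), exploiting the fact that $|y| \bmod 2m'$ then depends only on $b_L \oplus b_R$. Total cost: $2(m' - 1) + 1 = m - 1$.

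For the inductive step with $m = 3m'$, I split the input into three thirds, apply $B_{m'}$ to each at cost $3(m' - 1)$, and resolve the ambiguity bits $b_L, b_M, b_R \in \{0, 1\}$ of the ambiguous thirds. In the worst case (all three ambiguous), $|y| \bmod 3m' = m' \cdot \bigl((b_L + b_M + b_R) \bmod 3\bigr)$, which the new mod-$3$ algorithm extracts in $2$ extra queries (querying one designated bit per ambiguous third). The sub-cases with $0$, $1$, or $2$ ambiguous thirds need at most $0$, $1$, or $2$ extra queries respectively, so the worst-case total is $3(m' - 1) + 2 = m - 1$, completing the inductive step.

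The principal obstacle is establishing the new 2-query mod-$3$ algorithm on 3 bits---the novel algorithmic contribution, to be verified in a separate lemma. The rest (block decomposition, doubling and tripling inductive steps, and the ambiguity case analyses) is essentially bookkeeping, though some care is required to confirm that the algorithm remains exact on every branch of the recursion while respecting the $m - 1$ query budget at each recursive level.
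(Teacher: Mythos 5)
Your proposal is correct, and it reaches the bound by a genuinely different organization of the same underlying ideas. The paper proves a stronger statement for arbitrary $n$ by induction on $m$: its algorithm outputs not just $|x| \bmod m$ but a partition of $[n]$ into a part of known Hamming weight and size-$m$ blocks on which $x$ is constant; the inductive step factors $m = m_1 m_2$, runs the mod-$m_1$ algorithm on the whole $n$-bit string, then runs the mod-$m_2$ algorithm on one representative bit per constant block, and the query count follows from the identity $\lceil n(1-1/m_1)\rceil + \lceil \lfloor n/m_1\rfloor(1-1/m_2)\rceil = \lceil n(1-1/m)\rceil$. You instead confine all recursion to a single block of size exactly $m$ (your block lemma, costing $m-1$ queries) and tile $[n]$ with such blocks plus $r$ leftover bits; the mechanism you use to glue sub-blocks --- a size-$m'$ sub-block whose weight is $0 \bmod m'$ must be constant, so its residual contribution is a single ``ambiguity bit'' extractable later via Deutsch or the mod-3 routine on one representative bit per ambiguous sub-block --- is precisely the paper's constant-block trick, applied at block scale rather than string scale. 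Your route buys simpler arithmetic (only $q(m-1)+r = \lceil n(1-1/m)\rceil$ and the per-block budgets $2(m'-1)+1$ and $3(m'-1)+2$ are needed) and a cleaner statement to induct on, at the cost of an explicit case analysis over how many sub-blocks are ambiguous; the paper's route gives one uniform induction but needs the floor/ceiling identity and a heavier inductive invariant. One point to make explicit in a full write-up: different branches of your block algorithm consume different numbers of queries (e.g.\ $m-2$ versus $m-1$ in the doubling step), so to fit the fixed-length query model the shorter branches should be padded with queries to the dimensions on which the oracle acts trivially --- the paper's model provides these dimensions, and its own algorithm has the same input-dependent branching.
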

The two core components of our algorithm are the following two basic algorithms:
\begin{itemize}
    \item A 1-query algorithm that outputs the parity of 2 input bits. This algorithm is essentially due to Deutsch~\cite{deutsch:uqc}.
    \item A 2-query algorithm that outputs the Hamming weight of a 3-bit input modulo 3. We give this algorithm in Section~\ref{sec: upper bound}.
\end{itemize}
The overall algorithm for Theorem~\ref{thm: intro main upper bound} is iterative.
In the simplest case when $m = 2$ or $m = 3$, we partition the $n$ input bits into blocks of size $m$ and run the corresponding basic algorithm from above on each of the blocks, using $m-1$ queries per block. If the input size is not a multiple of $m$, we query each of the remaining bits individually. The mod-$m$ Hamming weight of the initial input is then the sum of all the outputs modulo $m$. It is not hard to show that the query complexity of this algorithm is at most $\left\lceil \frac{n}{m}\rbra{m-1}\right\rceil = \left\lceil n\rbra{1 - \frac{1}{m}}\right\rceil$.

Let us sketch our algorithm for the general case when $m = 2^i 3^j$ with $i + j > 1$. We write $m = m_1 m_2$, for some $m_1, m_2 < m$, and proceed recursively (see Figure~\ref{fig:alg}).
First, we partition the $n$-bit string into blocks of size $m_1$ and recursively determine the Hamming weight of each block modulo~$m_1$. This takes at most $\left\lceil n\rbra{1 - 1/m_1}\right\rceil$ queries in total. For the constant blocks where the string is either $0^{m_1}$ or $1^{m_1}$ the modulo-$m_1$ Hamming weight is $0$, while for the non-constant blocks we get the actual Hamming weight of each block.
There are fewer than $m_1$ leftover bits that did not fit in a full block, so we query each of them individually. Let $w$ denote the total Hamming weight of these bits and all non-constant blocks together.
It remains to determine the total Hamming weight of all constant blocks with respect to the composite modulus $m = m_1 m_2$.
We replace each constant block by a single bit of the same value and recursively compute the Hamming weight $w'$ of the resulting $n'$-bit string modulo $m_2$ using $\left\lceil n'\rbra{1 - 1/m_2}\right\rceil$ additional queries.
Since each variable of the shorter string corresponds to $m_1$ equal bits of the original string, the overall Hamming weight is $m_1 w'+ w$ modulo $m$. Careful analysis shows that the cost of this algorithm is at most $\left\lceil n\rbra{1 - \frac{1}{m}}\right\rceil$, proving Theorem~\ref{thm: intro main upper bound}.
For more details, see Section~\ref{sec: upper bound}.

\newcommand{\curlybrace}[3]{
\draw (#1-#3+0.7pt,#2-\r)
  arc[start angle = 180, end angle =  90] -- (#1-\r,#2) arc[start angle = -90, end angle =   0]
  arc[start angle = 180, end angle = 270] -- (#1+#3-\r-0.7pt,#2)
  arc[start angle =  90, end angle =   0];
}

\begin{figure}[t!]
\centering
\def\r{0.2cm} 
\def\w{0.5cm} 
\def\W{1.4cm} 
\def\H{14pt}  
\def\T{1.2cm} 
\def\D{2.0cm} 
\begin{tikzpicture}[thin, > = latex, radius = \r,
    block/.style = {draw, inner sep = 0, minimum height = \H, align = center, anchor = west},
    B/.style = {block, text width = \W},
    b/.style = {block, text width = \w}
  ]
  
  \node[align = center] at (3*\W,1.2*\T) {Constant\\blocks};
  \curlybrace{3*\W}{0.5*\T}{3*\W}
  \node at (4.5*\W,0) {$\cdots$};
  \node at (3*\W+1.5*\w,-\D) {...};
  \foreach \i/\a in {0/0, 1/1, 2/1, 3/0, 5/1} {
    \node[B] at (\i*\W,0) {\;$\a^{m_1}$};
    \node[b] at (3*\W-3*\w+\i*\w,-\D) {\a};
    \draw[->] (\i*\W+0.5*\W,-0.5*\H) -- (3*\W-2.5*\w+\i*\w,-\D+0.5*\H);
  }

  \node[align = center] at (8*\W,1.2*\T) {Non-constant\\blocks};
  \curlybrace{8*\W}{0.5*\T}{2*\W}
  \node at (8.5*\W,0) {$\cdots$};
  \node[B] at (6*\W,0) {$101101$};
  \node[B] at (7*\W,0) {$001010$};
  \node[B] at (9*\W,0) {$011011$};

  \def\more{0.9cm}
  \node[align = center] at (10*\W+0.5*\more,1.2*\T) {Leftover\\bits};
  \curlybrace{10*\W+0.5*\more}{0.5*\T}{0.5*\more}
  \node[B, text width = \more] at (10*\W,0) {$1011$};

  \begin{scope}[yscale = -1]
    \curlybrace{8*\W+0.5*\more}{0.5*\T}{3.25cm}
  \end{scope}
  \node[align = center] at (8*\W+0.5*\more,-1.2*\T) {For this string, we know the\\exact Hamming weight $w$.};

  \begin{scope}[yscale = -1]
    \curlybrace{3*\W}{\D+0.5*\T}{3*\w}
  \end{scope}
  \node[align = center] at (3*\W,-\D-1.2*\T) {Recurse to determine the\\Hamming weight $w'$ (mod $m_2$).};

\end{tikzpicture}
\caption{\label{fig:alg}Sketch of our algorithm for computing the Hamming weight modulo $m = m_1 m_2$.
First, we split the string into blocks of size $m_1$ and recursively compute their Hamming weight modulo~$m_1$.
For those blocks where the string is constant, the actual Hamming weight is either $0$ or $m_1$, while for non-constant blocks we know it exactly.
(Unlike depicted above, the two types of blocks may come in any order.)
Next, we individually query each of the leftover bits and denote by $w$ the total Hamming weight of these bits and all non-constant blocks together.
Finally, we shorten each constant block to a single bit and recursively determine the Hamming weight $w'$ modulo $m_2$ of the resulting string.
The Hamming weight modulo $m_1 m_2$ of the original string is then $m_1 w' + w$.}
\end{figure}
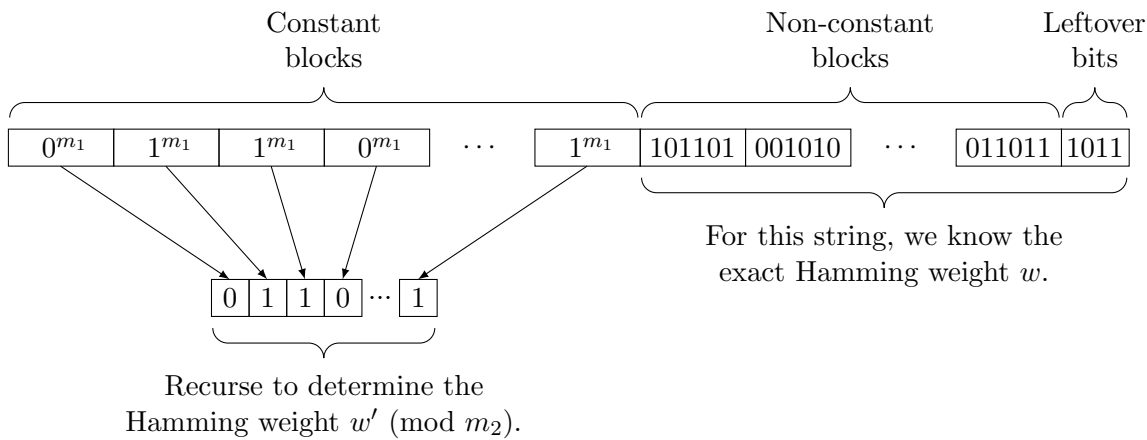

Our lower bound holds for more general Boolean functions, and also against non-deterministic quantum query algorithms~\cite{wolf:nqj}. A \emph{non-deterministic} query algorithm $\A$ for a Boolean function $f : \zone^n \to \zone$ has the constraint that $\Pr[\A(x) = 0] = 1$ for $x \in f^{-1}(0)$ and $\Pr[\A(x) = 1] > 0$ for $x \in f^{-1}(1)$. The \emph{cost} of $\A$ is the number of queries it makes (on the worst-case input), and the non-deterministic quantum query complexity of $f$, which we denote by $\QN(f)$, is the minimum cost of a non-deterministic quantum query algorithm for $f$. In contrast, the \emph{exact} query complexity of a Boolean function $f : \zone^n \to \zone$, which we denote by $\QE(f)$, is the minimum number of queries required by a quantum query algorithm that outputs $f(x)$ with probability 1, for all $x \in \zone^n$.

For integers $1 < m \leq n$, define the function $\MOD_m : \zone^n \to \zone$ by
\[
\MOD_m(x) = \begin{cases}
1 & |x| \equiv 0 \pmod m,\\
0 & \textnormal{otherwise}.
\end{cases}
\]
The following is our main lower bound.
\begin{theorem}\label{thm: intro main lower bound}
Let $1 < m \leq n$ be integers. Then,
\[
\QN(\MOD_m) \geq \left\lceil n\rbra{1 - \frac{1}{m}}\right\rceil.
\]
\end{theorem}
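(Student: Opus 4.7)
The plan is to apply the polynomial method at the level of individual amplitudes rather than acceptance probabilities, then symmetrize around the all-zeros input. Working at the amplitude level saves a factor of two relative to the textbook Beals et al.~argument and is precisely what allows the bound to exceed $n/2$ when $m > 2$.

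Concretely, after $T$ queries, each amplitude $\alpha_i(x)$ of the final state in the computational basis is a multilinear complex polynomial in the input bits of degree at most $T$. For a non-deterministic algorithm with accepting set $A$, the acceptance probability $p_{\mathrm{acc}}(x) = \sum_{i \in A} \abs{\alpha_i(x)}^2$ is strictly positive iff some $\alpha_i(x) \ne 0$. The non-deterministic acceptance condition for $\MOD_m$ thus translates exactly to: the common zero set of $\{\alpha_i\}_{i \in A}$ in $\zone^n$ equals $V := \{x \in \zone^n : |x| \not\equiv 0 \pmod m\}$. In particular each $\alpha_i$ vanishes on $V$, and since $0^n \notin V$, some amplitude $\alpha := \alpha_{i^*}$ satisfies $\alpha(0^n) \ne 0$.

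Next I would symmetrize by averaging over coordinate permutations: $\overline{\alpha}(x) := \frac{1}{n!}\sum_{\sigma \in S_n} \alpha(\sigma \cdot x)$. This preserves the degree bound $T$ and preserves vanishing on $V$ (which is permutation-invariant), and crucially $\overline{\alpha}(0^n) = \alpha(0^n) \neq 0$ because $0^n$ is fixed by every permutation. As a symmetric multilinear polynomial of degree at most $T$, $\overline{\alpha}$ equals $q(|x|)$ for a univariate polynomial $q$ with $\deg q \le T$ (via the standard identity $e_k(x) = \binom{|x|}{k}$ on $\zone^n$). Now $q$ vanishes on every $k \in \{0, 1, \ldots, n\}$ with $k \not\equiv 0 \pmod m$ --- a set of size $n - \lfloor n/m \rfloor$ --- while $q(0) \ne 0$ shows $q \not\equiv 0$. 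Counting zeros then yields $T \ge \deg q \ge n - \lfloor n/m \rfloor = \lceil n(1 - 1/m) \rceil$.

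The main subtlety is exactly the factor-of-two issue. Symmetrizing $p_{\mathrm{acc}}$ directly would produce a univariate polynomial of degree at most $2T$ with the same zero set, yielding only $T \ge \lceil n(1-1/m)/2 \rceil$. Staying at the amplitude level keeps the degree at $T$, but we must ensure that symmetrization does not kill the nonzero amplitude value. Choosing $x^* = 0^n$ --- a permutation-fixed point --- sidesteps this: the symmetric average at a fixed point equals the original value, so $\overline{\alpha}(0^n) \neq 0$ automatically. At a generic 1-input such as a weight-$m$ string, the complex values $\alpha(\sigma \cdot x^*)$ over different permutations could cancel in the average, and this is precisely the obstruction that forces one to instead symmetrize $\abs{\alpha}^2$ and lose a factor of two.
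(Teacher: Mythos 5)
Your proof is correct, and it follows the same basic strategy as the paper: a non-deterministic version of the polynomial method, Minsky--Papert symmetrization anchored at the permutation-fixed $1$-input $0^n$, and root counting of the resulting univariate polynomial. The one genuine difference is in how you get from the algorithm to a single low-degree polynomial. The paper factors the argument through the non-deterministic degree: it invokes $\QN(f) \geq \ndeg(f)$ from de Wolf's earlier work, whose proof needs a probabilistic (generic linear combination) argument to build \emph{one} polynomial of degree at most $T$ whose zero set on $\zone^n$ is exactly $f^{-1}(0)$, and then symmetrizes that polynomial. You instead observe that for this particular function one never needs a polynomial with the exact support of $f$: it suffices to take a \emph{single} accepting amplitude $\alpha_{i^*}$ that is nonzero at $0^n$, since every accepting amplitude already vanishes on all of $f^{-1}(0)$ and the value at the fixed point $0^n$ survives symmetrization verbatim. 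This makes your argument self-contained (no appeal to the $\QN \geq \ndeg$ theorem or its probabilistic step), at the cost of being slightly less modular; the paper's route yields the more general statement that $\QN(f)$ of any symmetric $f$ with $f(0^n)=1$ is at least the number of Hamming weights on which $f$ vanishes, with the $\MOD_m$ bound as an immediate corollary. Your closing remark about why symmetrizing $\abs{\alpha}^2$ would lose the factor of two, and why the fixed point $0^n$ is the right anchor, correctly identifies the crux of the whole argument.
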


Our lower bound is proved via the non-deterministic variant of the polynomial method~\cite{wolf:nqj}. Since it exceeds $n/2$ whenever $m>2$, we cannot obtain this tight result from the standard polynomial method~\cite{bbcmw:polynomialsj} (which yields lower bounds that are at most $n/2$). 

Clearly the task of computing the Hamming weight of an input string $x$ modulo $m$ is at least as hard as computing $\MOD_m(x)$. Along with the simple observation that $\QE(f) \geq \QN(f)$ for all Boolean functions $f$, Theorems~\ref{thm: intro main upper bound} and~\ref{thm: intro main lower bound} yield our main result, stated below.

\begin{theorem}\label{thm: intro main}
Let $1 < m \leq n$ be integers such that $m = 2^i3^j$ for some integers $i, j$. Then the exact quantum query complexity of computing the Hamming weight of an $n$-bit string modulo $m$ is $\left\lceil n\rbra{1 - \frac{1}{m}}\right\rceil$.
\end{theorem}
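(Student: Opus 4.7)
The plan is to derive Theorem~\ref{thm: intro main} as an immediate consequence of Theorems~\ref{thm: intro main upper bound} and~\ref{thm: intro main lower bound}, together with a trivial reduction and the obvious inequality $\QE(f) \geq \QN(f)$. Under the hypothesis $m = 2^i 3^j$, Theorem~\ref{thm: intro main upper bound} already furnishes an exact quantum query algorithm computing the full Hamming weight modulo $m$ using at most $\left\lceil n\rbra{1 - 1/m}\right\rceil$ queries, so the upper bound half of Theorem~\ref{thm: intro main} requires no further argument.

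For the matching lower bound, I would argue as follows. Let $\A$ be any exact quantum query algorithm that, on input $x \in \zone^n$, outputs $|x| \bmod m$ with probability $1$. Then by classical post-processing of $\A$'s output (mapping the answer $0$ to $1$ and every nonzero residue to $0$), without making any additional queries, one obtains an exact quantum query algorithm for $\MOD_m$. Hence the number of queries made by $\A$ is at least $\QE(\MOD_m)$. Since an exact algorithm is trivially a non-deterministic algorithm (accept with probability $1$ on $1$-inputs, with probability $0$ on $0$-inputs), we have $\QE(\MOD_m) \geq \QN(\MOD_m)$, and Theorem~\ref{thm: intro main lower bound} gives $\QN(\MOD_m) \geq \left\lceil n\rbra{1 - 1/m}\right\rceil$. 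Combining these inequalities yields the desired lower bound on the query complexity of computing $|x| \bmod m$.

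Putting the two bounds together proves Theorem~\ref{thm: intro main}. There is essentially no obstacle here: the only non-routine steps, namely constructing the recursive block-decomposition algorithm and proving the non-deterministic polynomial-method lower bound against $\MOD_m$ for all $m \geq 2$, are encapsulated in the two theorems we are invoking. The small sanity check worth making in the write-up is that the reduction is genuinely query-free and hence preserves the query count exactly, so that the lower bound for $\MOD_m$ transfers on the nose (rather than up to an additive constant) to the full Hamming-weight-mod-$m$ problem, matching the ceiling in the upper bound.
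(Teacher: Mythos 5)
Your proposal is correct and follows exactly the paper's own derivation: the upper bound is Theorem~\ref{thm: intro main upper bound}, and the lower bound comes from reducing $\MOD_m$ to the Hamming-weight-mod-$m$ problem by query-free post-processing and chaining $\QE(\MOD_m) \geq \QN(\MOD_m) \geq \left\lceil n\rbra{1 - 1/m}\right\rceil$ via Theorem~\ref{thm: intro main lower bound}. Nothing further is needed.
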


Theorem~\ref{thm: intro main lower bound} implies that the lower bound in Theorem~\ref{thm: intro main} in fact holds for \emph{all} integers $m \leq n$. We conjecture that this is tight (see discussion in Section~\ref{sec:discussion}).

\begin{conjecture}\label{conj:main conjecture}
Let $1 < m \leq n$ be integers. Then the exact quantum query complexity of computing the Hamming weight of an $n$-bit string modulo $m$ is $\left\lceil n\rbra{1 - \frac{1}{m}}\right\rceil$.
\end{conjecture}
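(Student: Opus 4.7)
The plan is to extend the recursive upper-bound construction of Theorem~\ref{thm: intro main upper bound} to arbitrary moduli~$m$. The recursive step, which writes $m = m_1 m_2$ with $1 < m_1, m_2 < m$, partitions the input into blocks of size~$m_1$, recursively computes the mod-$m_1$ Hamming weight of each block (separating constant from non-constant blocks), and then recurses mod~$m_2$ on the shortened string, works for any factorization: the same telescoping ceiling analysis as in Section~\ref{sec: upper bound} gives the $\lceil n(1-1/m)\rceil$ bound provided each recursive sub-call does. By strong induction on~$m$, the conjecture therefore reduces to constructing, for every prime~$p \geq 5$, an exact quantum algorithm that computes the Hamming weight of a $p$-bit string modulo~$p$ using only $p-1$ queries; the cases $p = 2$ and $p = 3$ are handled by Deutsch's algorithm and by the new 2-query algorithm of Section~\ref{sec: upper bound}, respectively.

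For the base case at prime~$p$ I would first try phase queries with $p$-th roots of unity. Setting $\omega = e^{2\pi i/p}$ and defining the phase oracle $O_x \colon |j\rangle \mapsto \omega^{x_j}|j\rangle$ on $p$ basis states, one query to $O_x$ on the uniform superposition yields $|\psi_x\rangle = \tfrac{1}{\sqrt p}\sum_{j=0}^{p-1}\omega^{x_j}|j\rangle$, and the overlaps $\langle\psi_y|\psi_x\rangle$ encode information about both $|x|\bmod p$ and the agreement pattern between $x$ and $y$. The hope is that a sequence of $p-1$ such queries, interleaved with carefully chosen auxiliary unitaries, can rotate the $p$ possible joint output states (one per residue class of $|x|\bmod p$) into a perfectly orthogonal set, generalizing the exact interference exploited by the $p = 2, 3$ constructions. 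As a fallback, one could search numerically for small primes using exact quantum-algorithm ansatzes in the spirit of~\cite{AIS13} and then look for an algebraic pattern that lifts to general~$p$.

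The step I expect to be the main obstacle is precisely this prime base case for $p \geq 5$. Theorem~\ref{thm: intro main lower bound} already guarantees that $p-1$ queries are necessary for the easier problem $\MOD_p$, so any such algorithm would have to be simultaneously \emph{tight} and \emph{exact}, not merely bounded-error. The root-of-unity identities driving the $p = 2, 3$ algorithms rely on small-dimensional coincidences that do not obviously persist for larger primes, and it is conceivable that the true exact quantum complexity jumps strictly above $p-1$ for some~$p$. If no such $(p-1)$-query exact algorithm exists, the conjecture is false for every composite $m$ divisible by that~$p$, and pinpointing the correct complexity would likely demand a lower-bound technique stronger than the non-deterministic polynomial method used for Theorem~\ref{thm: intro main lower bound}.
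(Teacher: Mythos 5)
There is a genuine gap, and you have in fact identified it yourself: the statement you were asked to prove is an open conjecture, and your argument does not close it. Your reduction is correct as far as it goes --- the recursive step of Theorem~\ref{thm: main upper bound} works for an arbitrary factorization $m = m_1 m_2$ with $1 < m_1, m_2 < m$, and the ceiling identity $\lceil n(1-1/m_1)\rceil + \lceil \lfloor n/m_1\rfloor(1-1/m_2)\rceil = \lceil n(1-1/(m_1m_2))\rceil$ holds for all such integers, so by strong induction everything reduces to exhibiting, for each prime $p$, an exact $(p-1)$-query algorithm computing $|x| \bmod p$ on $p$-bit inputs. This is exactly the observation made in Section~\ref{sec:discussion} of the paper. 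But the base case for primes $p \geq 5$ is the entire content of the conjecture, and your treatment of it is a research programme, not a proof: ``the hope is that a sequence of $p-1$ such queries, interleaved with carefully chosen auxiliary unitaries, can rotate the $p$ possible joint output states into a perfectly orthogonal set'' asserts no construction and gives no argument that the required unitaries exist. The paper's own mod-3 algorithm (Lemma~\ref{lemma: mod3 3 bits}) was found essentially by hand in a 5-dimensional space, and the authors record the Gram matrix of its final states precisely because they do not know how to generalize it; the first open instance is already a 4-query exact algorithm for $|x| \bmod 5$ on $5$ bits.

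One smaller inaccuracy in your closing paragraph: from the nonexistence of a $(p-1)$-query exact algorithm for the $n=m=p$ instance you infer that ``the conjecture is false for every composite $m$ divisible by that $p$.'' That does not follow. Failure of the $n=m=p$ base case would only break \emph{this particular} recursive construction for such $m$; an algorithm for composite $m$ need not proceed by partitioning into blocks and solving the prime instance as a subroutine, so the conjectured bound $\lceil n(1-1/m)\rceil$ could in principle still be achievable for composite $m$ by some other route. The lower bound of Theorem~\ref{thm: intro main lower bound} does hold for all $m$, so the only question is the upper bound, and that question remains open for every $m$ with a prime factor $\geq 5$.
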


\section{Preliminaries}
For a string $x \in \zone^n$, we index its coordinates by the set $[n] = \cbra{1,2, \dots, n}$, that is, $x = x_1x_2\ldots x_n$. Let $|x| = |\cbra{i \in [n] : x_i = 1}|$ denote the Hamming weight of $x$. For a set $S \subseteq [n]$, let $x_S$ denote the restriction of $x$ to the indices in $S$. That is, $x_S$ is the string in $\zone^S$ defined by $(x_S)_i = x_i$ for all $i \in S$.

An $n$-variate multilinear polynomial $p$ is a function $p : \R^n \to \C$ that can be expressed as
\[
p(x) = \sum_{S \subseteq [n]} a_S \prod_{i \in S} x_i,
\]
for some $a_S \in \C$ for all $S \subseteq [n]$. The \emph{degree} of $p$ is defined as $\deg(p) := \max_{S \subseteq [n]}\cbra{|S| : a_S \neq 0}$. It is well known that every total Boolean function $f : \zone^n \to \zone$ has a unique multilinear polynomial $p: \R^n \to \C$ such that $f(x) = p(x)$ for all $x \in \zone^n$. Let $\deg(f)$ denote the degree of this polynomial.

We refer the reader to~\cite{nielsen&chuang:qc,LECTURENOTES} for the basics of quantum computing. A quantum algorithm acts on some finite-dimensional Hilbert space. It starts in some fixed initial state and alters it through unitary operations and measurements. A quantum \emph{query} algorithm is only allowed to access the input through one particular unitary operation which is referred to as the ``oracle'', i.e., the initial state, measurement operations, and all the other unitary operations must be independent of the input. In the standard query model, we access a bit string $x \in \{0,1\}^n$ by applying an oracle~$O_x$ given by the diagonal matrix
\[
  O_x =
  \begin{bmatrix}
    (-1)^{x_1} & \\
    & (-1)^{x_2} & \\
    && \ddots & \\
    &&& (-1)^{x_n}
  \end{bmatrix}.
\]
There may be additional dimensions where the oracle $O_x$ acts trivially
(which is needed to be able to apply a controlled version of $O_x$, otherwise an algorithm could not see the difference between $x$ and its complement).

\begin{definition}[Non-deterministic degree]\label{defn: ndeg}
A non-deterministic polynomial for a Boolean function $f: \zone^n \to \zone$ is a multilinear polynomial $p : \R^n \to \C$ such that for all $x\in\zone^n$, $p(x) = 0$ iff $f(x) = 0$ (i.e., $f$ and $p$ have the same support on the Boolean cube).
The non-deterministic degree of~$f$, denoted $\ndeg(f)$, is the minimum degree among all such~$p$.
\end{definition}

\cite[Theorem 2.3]{wolf:nqj} showed via a probabilistic argument that the non-deterministic quantum query complexity (see Section~\ref{sec: our results}) of a Boolean function $f : \zone^n \to \zone$ is bounded below by $\ndeg(f)$.\footnote{In fact that theorem showed $\QN(f)=\ndeg(f)$ for all $f$, but the upper bound does not concern us here.} This contrasts with the usual polynomial method lower bound $\QE(f)\geq \deg(f)/2$ where the factor-2 is sometimes necessary, as witnessed by the Parity function~\cite{bbcmw:polynomialsj,fggs:parity}.

\begin{theorem}[{\cite{wolf:nqj}}]\label{thm: qn ndeg}
For all Boolean functions $f : \zone^n \to \zone$, we have $\QN(f) \geq \ndeg(f)$.
\end{theorem}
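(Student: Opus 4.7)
The plan is to start from any non-deterministic quantum query algorithm $\A$ for $f$ using $T = \QN(f)$ queries, and to extract from it a single complex multilinear polynomial of degree at most $T$ whose zero set on $\zone^n$ is exactly $f^{-1}(0)$. The first step is the now-standard observation behind the polynomial method: after $T$ applications of the phase oracle $O_x$, interleaved with input-independent unitaries, the amplitude $\alpha_y(x)$ of each basis state $\ket{y}$ in the final state is a polynomial in $(-1)^{x_1},\dots,(-1)^{x_n}$ of degree at most $T$. Substituting $(-1)^{x_i}=1-2x_i$ and reducing modulo the cube relations $x_i^2=x_i$ yields, for each $y$, a multilinear polynomial $\alpha_y:\R^n\to\C$ of degree at most $T$ that equals the true amplitude for every $x\in\zone^n$.

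The second step relates the non-deterministic acceptance condition to these amplitudes. Let $A$ denote the set of accepting basis states, so that the acceptance probability is $p(x)=\sum_{y\in A}|\alpha_y(x)|^2$. As a sum of non-negative terms, $p(x)=0$ iff $\alpha_y(x)=0$ for every $y\in A$. By the non-determinism condition this holds iff $f(x)=0$. Hence the joint vanishing locus on the Boolean cube of the family $\cbra{\alpha_y}_{y\in A}$ is exactly $f^{-1}(0)$.

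The third and key step collapses this family into one polynomial of the same degree by a generic linear combination. Define
\[
q(x)\;=\;\sum_{y\in A} c_y\,\alpha_y(x)
\]
for complex coefficients $c_y$ to be chosen. For any fixed $x\in f^{-1}(1)$, at least one $\alpha_y(x)$ is non-zero, so $\cbra{(c_y):q(x)=0}$ is a proper $\C$-linear subspace of $\C^{|A|}$ and thus has Lebesgue measure zero. Since $f^{-1}(1)\subseteq\zone^n$ is finite, a union bound produces a choice of $(c_y)$ for which $q(x)\neq 0$ simultaneously for every $x\in f^{-1}(1)$. On $f^{-1}(0)$ we have $q(x)=0$ trivially. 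After one more multilinearization, $q$ is a non-deterministic polynomial for $f$ of degree at most $T$, so $\ndeg(f)\le T=\QN(f)$.

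The main subtlety is exactly the passage from the real acceptance polynomial (which naturally has degree $2T$, explaining the factor of $2$ loss in the standard polynomial method lower bound $\QE(f)\ge\deg(f)/2$) to a single complex polynomial of degree $T$; the random-linear-combination step is what avoids squaring the amplitudes. Two technical points I would verify carefully: (i) the degree bound on each $\alpha_y$ is preserved under the substitution $(-1)^{x_i}=1-2x_i$ (each $O_x$ contributes a factor linear in a single $x_i$, and input-independent unitaries do not increase degree), and (ii) multilinearization modulo $x_i^2=x_i$ never raises the degree, so both the individual $\alpha_y$ and the final $q$ remain of degree at most $T$.
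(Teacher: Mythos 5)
Your proposal is correct and is essentially the same argument as the cited source: the paper does not reprove Theorem~\ref{thm: qn ndeg} but attributes it to the ``probabilistic argument'' of \cite[Theorem 2.3]{wolf:nqj}, which is exactly your random-linear-combination of the degree-$T$ accepting amplitudes. (One small simplification: $q$ is already a linear combination of multilinear polynomials, so no further multilinearization is needed.)
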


We require the following result regarding symmetrization of multivariate polynomials, due to Minsky and Papert~\cite{minsky&papert:perceptrons}. For completeness we give a proof below with complex ranges.

\begin{theorem}[\cite{minsky&papert:perceptrons}]\label{thm: mp}
Let $p : \R^n \to \C$ be a multilinear polynomial. Then there exists a univariate polynomial $q : \R \to \C$ such that $\deg(q) \leq \deg(p)$ and for all $k \in \cbra{0,1,\dots,n}$,
\[
q(k) = \E_{x \in \zone^n : |x| = k} [p(x)].
\]
\end{theorem}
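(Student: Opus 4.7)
The plan is to define $q$ directly as the symmetrization of $p$, and then verify that the resulting object is actually a polynomial of the desired degree by working monomial by monomial.

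First I would expand $p$ in its multilinear form $p(x) = \sum_{S \subseteq [n]} a_S \prod_{i \in S} x_i$ and use linearity of expectation to write
\[
\E_{|x|=k}[p(x)] = \sum_{S \subseteq [n]} a_S \cdot \E_{|x|=k}\left[\prod_{i \in S} x_i\right].
\]
The inner expectation is simply the probability, over a uniformly random Hamming-weight-$k$ string, that every coordinate indexed by $S$ is a~$1$. A standard counting argument gives
\[
\E_{|x|=k}\left[\prod_{i \in S} x_i\right] = \frac{\binom{n-|S|}{k-|S|}}{\binom{n}{k}} = \frac{k(k-1)\cdots(k-|S|+1)}{n(n-1)\cdots(n-|S|+1)},
\]
valid for $k \in \{0,1,\dots,n\}$ (with the convention that the expression vanishes when $k < |S|$, which matches the right-hand side viewed as a polynomial in $k$).

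The crucial observation is that the rightmost expression is, for each fixed $S$, a polynomial in $k$ of degree exactly $|S|$, with complex coefficients (in fact rational times the $a_S$). Therefore I would define
\[
q(k) := \sum_{S \subseteq [n]} a_S \cdot \frac{k(k-1)\cdots(k-|S|+1)}{n(n-1)\cdots(n-|S|+1)},
\]
which is manifestly a univariate polynomial in $k$. By construction $\deg(q) \leq \max\{|S| : a_S \neq 0\} = \deg(p)$, and by the computation above $q(k) = \E_{|x|=k}[p(x)]$ for every integer $k \in \{0,1,\dots,n\}$.

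There is essentially no obstacle here; the only thing to be slightly careful about is the boundary cases $|S| > k$, where the binomial-coefficient formula needs the convention $\binom{n-s}{k-s} = 0$, and one must check that the falling-factorial expression $k(k-1)\cdots(k-|S|+1)$ gives the correct value~$0$ at those integer points (which it does, since one of the factors is then zero). The fact that the range of $p$ and $q$ is $\C$ rather than $\R$ plays no role, because the argument is purely linear in the coefficients $a_S$.
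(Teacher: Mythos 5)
Your proof is correct. It reaches the same destination as the paper but by a slightly different route: the paper first symmetrizes $p$ over the symmetric group, obtaining $\overline{p}(x) = \frac{1}{n!}\sum_{\pi} p(\pi(x))$, observes that on Boolean inputs the elementary symmetric polynomials collapse to binomial coefficients $\binom{|x|}{k}$, and only afterwards identifies $\overline{p}$ on weight-$k$ inputs with the desired expectation via an orbit-counting step. You instead define $q$ directly as the expectation and verify polynomiality monomial by monomial, using the identity
\[
\E_{x \in \zone^n : |x| = k}\Bigl[\prod_{i \in S} x_i\Bigr] = \frac{\binom{n-|S|}{k-|S|}}{\binom{n}{k}} = \frac{k(k-1)\cdots(k-|S|+1)}{n(n-1)\cdots(n-|S|+1)},
\]
which is a falling factorial of degree exactly $|S|$ in $k$. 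This identity is precisely what the paper's symmetrization argument encodes, so the two proofs rest on the same combinatorial fact; yours is arguably more self-contained (no need to introduce $\overline{p}$ or argue that averaging over a permutation orbit equals averaging over all weight-$k$ strings), while the paper's version makes the intermediate object $\overline{p}$ explicit, which is the form in which Minsky--Papert symmetrization is usually quoted. Your handling of the boundary case $k < |S|$ and the remark that the complex range is immaterial are both correct.
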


\begin{proof}
Define a symmetrized $n$-variate polynomial $\overline{p}(x) = \frac{1}{n!} \sum_{\pi \in \mathrm{S}_n} p(\pi(x))$ where $\mathrm{S}_n$ is the symmetric group and $\pi(x)$ denotes the variables $x$ permuted according to the permutation $\pi$.
The degree~$d$ of $\overline{p}$ is at most $\deg(p)$.
Since $p$ is multilinear, the averaging over $\mathrm{S}_n$ causes all monomials of degree $k$ in $\overline{p}$ to have the same coefficient $c_k \in \C$:
\[
\overline{p}(x) = \sum_{k=0}^{d} c_k \sum_{S\subseteq[n]:|S|=k}\prod_{i\in S} x_i.
\]
When restricting to inputs of the form $x \in \zone^n$, we have
\[
  \sum_{S\subseteq[n]:|S|=k}\prod_{i\in S} x_i = \binom{|x|}{k},
\]
so we can rewrite $\overline{p}(x)$ as follows:
\[
\overline{p}(x)
=\sum_{k=0}^{d} c_k \binom{|x|}{k}
=\sum_{k=0}^{d} \frac{c_k}{k!}|x|(|x|-1)\cdots (|x|-k+1).
\]
The latter can be viewed as a univariate polynomial $q$ in $|x|$ of degree $d\leq \deg(p)$.
To show that $q(k)$ agrees with the expectation of $p$ over inputs of Hamming weight $k$, let $y \in \zone^n$ be any string such that $|y| = k$. Then
\[
\frac{1}{\binom{n}{k}} \sum_{x \in \zone^n : |x| = k} p(x)
= \frac{1}{\binom{n}{k}} \frac{1}{k!(n-k)!} \sum_{\pi \in \mathrm{S}_n} p(\pi(y))
= \frac{1}{n!} \sum_{\pi \in \mathrm{S}_n} p(\pi(y))
= \overline{p}(y)
= q(k),
\]
as desired.
\end{proof}

\section{Upper bound}\label{sec: upper bound}

In this section, we prove Theorem~\ref{thm: intro main upper bound}. We exhibit an iterative algorithm to compute the modulo-$m$ Hamming weight of an input bit-string $x$, for any positive integer $m$ whose only prime factors are 2 and 3. Theorem~\ref{thm: intro main upper bound} immediately follows from the following statement.

\begin{theorem}\label{thm: main upper bound}
Let $n$ be a positive integer and let $m > 1$ be an integer such that $m = 2^i3^j$ for some integers $i, j$. Then there is an exact quantum query algorithm that queries $x\in\zone^n$ at most $\left\lceil n\rbra{1 - \frac{1}{m}}\right\rceil$ times and outputs the following:
\begin{itemize}
    \item A partition $S_1 \cup S_2 = [n]$ such that $|S_1|$ is a multiple of $m$,
    \item the Hamming weight of $x_{S_2}$,
    \item a further partition of $S_1$ into blocks of size $m$, with the property that for each block, $x$ restricted to the indices in that block is constant (i.e., either $0^m$ or $1^m$).
\end{itemize}
\end{theorem}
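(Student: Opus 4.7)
I plan to prove Theorem~\ref{thm: main upper bound} by induction on $i+j$, the total number of prime factors of $m = 2^i 3^j$. The base cases $m=2$ and $m=3$ are handled directly using the two basic algorithms announced in Section~\ref{sec: our results}; the inductive step factors $m = m_1 m_2$ with $m_1 \in \{2,3\}$ dividing $m$ and $m_2 = m/m_1 < m$, chaining an $m_1$-algorithm with a recursive $m_2$-algorithm applied to a contracted string.

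For the base case $m=2$, partition $[n]$ into $\lfloor n/2\rfloor$ disjoint pairs and one leftover index if $n$ is odd. Run Deutsch's 1-query parity algorithm on each pair. A pair returning parity~$0$ is constant and goes into $S_1$ as an $m$-block; a pair returning parity~$1$ is non-constant with Hamming weight exactly~$1$ and its two indices join $S_2$. Query any leftover bit directly. This uses $\lceil n/2\rceil$ queries and meets the specification. The case $m=3$ is analogous using the 2-query 3-bit mod-$3$ algorithm constructed elsewhere in this section: an output of~$0$ marks a constant $3$-block, while outputs~$1$ and~$2$ reveal the exact Hamming weight of a non-constant block. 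The total cost is $2\lfloor n/3\rfloor + (n\bmod 3) = \lceil 2n/3\rceil$.

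For the inductive step, first invoke the $m_1$-algorithm on $x$ to obtain a partition $S_1^{(1)} \cup S_2^{(1)} = [n]$ with $|S_1^{(1)}|$ divisible by $m_1$, the exact Hamming weight of $x_{S_2^{(1)}}$, and a decomposition of $S_1^{(1)}$ into $m_1$-blocks on which $x$ is constant. Let $n' = |S_1^{(1)}|/m_1$ and define $y \in \zone^{n'}$ by letting $y_k$ be the common (still unknown) value of $x$ on the $k$-th such block. Since each $y_k$ equals $x_{f(k)}$ for a fixed index $f(k)\in[n]$, the oracle $O_y$ is implemented by a single call to $O_x$ conjugated by fixed permutations, so one query to $y$ costs one query to $x$. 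Recursively apply the $m_2$-algorithm to $y$, obtaining $S_1^{(2)} \cup S_2^{(2)} = [n']$ with constant $m_2$-blocks inside $S_1^{(2)}$ and the known Hamming weight of $y_{S_2^{(2)}}$. Now lift back: the $x$-preimages of the $m_2$-blocks in $S_1^{(2)}$ are $m_2$ concatenated constant $m_1$-blocks sharing a common value, hence $m$-blocks of equal $x$-bits, yielding the output $S_1$; the $x$-preimages of $S_2^{(2)}$ merged with $S_2^{(1)}$ form the output $S_2$; and the Hamming weight of $x_{S_2}$ is the known weight of $x_{S_2^{(1)}}$ plus $m_1$ times the known weight of $y_{S_2^{(2)}}$.

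The main obstacle is verifying that the query bound survives the ceilings. Using the identity $\lceil n(1-1/k)\rceil = n - \lfloor n/k\rfloor$ (valid for integer $n$), the total cost of the inductive step is $\bigl(n - \lfloor n/m_1\rfloor\bigr) + \bigl(n' - \lfloor n'/m_2\rfloor\bigr)$, with $n' \le \lfloor n/m_1\rfloor$. Setting $a := \lfloor n/m_1\rfloor$ and using the nested-floor identity $\lfloor a/m_2\rfloor = \lfloor n/(m_1 m_2)\rfloor = \lfloor n/m\rfloor$, the desired bound $\lceil n(1-1/m)\rceil = n - \lfloor n/m\rfloor$ reduces to $a - \lfloor a/m_2\rfloor \ge n' - \lfloor n'/m_2\rfloor$. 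This holds because the map $k \mapsto k - \lfloor k/m_2\rfloor$ is non-decreasing on non-negative integers (consecutive floor values differ by $0$ or $1$) and $n' \le a$. This also gracefully covers the corner case $n' < m_2$, where the recursive call just queries every bit of $y$.
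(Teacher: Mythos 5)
Your proposal is correct and follows essentially the same route as the paper's proof: the same base cases via the 1-query parity and 2-query mod-3 subroutines, the same recursive factorization $m=m_1m_2$ with the constant $m_1$-blocks contracted to single representative bits for the mod-$m_2$ recursion, and the same reassembly $|x_{S_2}| = w + m_1 w'$. Your explicit verification that $k \mapsto k - \lfloor k/m_2\rfloor$ is non-decreasing just spells out the step the paper uses implicitly when replacing the number of constant blocks by $\lfloor n/m_1\rfloor$ in the query count.
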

First note that the algorithm described above immediately implies that the Hamming weight of an input $x$ modulo $m$ equals $|x_{S_2}|$ modulo $m$, since $|x_{S_1}| \equiv 0$ modulo $m$. Hence, Theorem~\ref{thm: main upper bound} implies Theorem~\ref{thm: intro main upper bound}.
We require the following two subroutines, the first of which is known as Deutsch's algorithm \cite{deutsch:uqc,cemm:revisited}. We include its proof for completeness.

\begin{lemma}[Deutsch's algorithm]\label{lemma: mod2 2 bits}
There exists an exact 1-query quantum algorithm that determines the Hamming weight modulo 2 of a 2-bit string.
\end{lemma}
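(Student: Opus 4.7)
The plan is to use the standard Deutsch trick: prepare a single qubit in an equal superposition of the two computational basis states corresponding to the two input positions, query the oracle once to imprint the relative phase $(-1)^{x_1 \oplus x_2}$ between them, then read off that phase via a Hadamard followed by a computational-basis measurement. Since the Hamming weight modulo $2$ of a $2$-bit string is precisely $x_1 \oplus x_2$, this is exactly the quantity we need to output.

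Explicitly, the algorithm would: (i) prepare the state $\frac{1}{\sqrt{2}}(\ket{1}+\ket{2})$ from a fixed initial state by applying a Hadamard-like gate; (ii) apply $O_x$ once to obtain $\frac{1}{\sqrt{2}}\bigl((-1)^{x_1}\ket{1}+(-1)^{x_2}\ket{2}\bigr)$, which equals $\frac{(-1)^{x_1}}{\sqrt{2}}\bigl(\ket{1}+(-1)^{x_1\oplus x_2}\ket{2}\bigr)$; (iii) apply a Hadamard to map the basis $\cbra{\ket{1}+\ket{2},\,\ket{1}-\ket{2}}$ back to the computational basis; and (iv) measure. Up to an irrelevant global sign $(-1)^{x_1}$, the post-Hadamard state is $\ket{x_1\oplus x_2}$, so the measurement returns $x_1\oplus x_2 = |x| \bmod 2$ with certainty, using exactly one oracle query.

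There is essentially no obstacle here; the proof is a short direct calculation. The only conceptual point worth highlighting is that the phase oracle is blind to a global sign, so the algorithm cannot distinguish $x$ from its bitwise complement, but this is harmless because the parity of a $2$-bit string is itself invariant under flipping both of its bits.
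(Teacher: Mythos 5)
Your proposal is correct and is exactly the paper's proof: prepare $H\ket{0}$, apply the phase oracle once, apply $H$ again, and measure, recovering $x_1\oplus x_2$ up to an irrelevant global sign $(-1)^{x_1}$. The explicit calculation you give is just the worked-out version of the paper's one-line verification.
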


\begin{proof}
Let
\[
H = 
\frac{1}{\sqrt{2}} \left[
\begin{array}{rr}
1 & 1\\ 
1 & -1
\end{array}\right],
\qquad O_x = \begin{bmatrix}
(-1)^{x_1} & 0\\
0 & (-1)^{x_2}
\end{bmatrix}.
\]
Consider the 1-query algorithm that prepares $H O_x H \ket{0}$, followed by a measurement in the computational basis.
It is easy to see that this outputs the parity of $x_1$ and $x_2$.
\end{proof}

\begin{lemma}\label{lemma: mod3 3 bits}
There exists an exact 2-query quantum algorithm that determines the Hamming weight modulo 3 of a 3-bit string.
\end{lemma}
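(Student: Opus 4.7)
The plan is to design an explicit $2$-query quantum algorithm that exactly computes $|x| \bmod 3$ for $x \in \zone^3$. I work in an ambient Hilbert space containing a three-dimensional subspace with basis $\cbra{\ket{1},\ket{2},\ket{3}}$ on which the phase oracle acts as $O_x \ket{i} = (-1)^{x_i}\ket{i}$ (an auxiliary qubit or extra ancilla dimension may be needed in order to implement the usual controlled version of the oracle and to supply workspace). The algorithm will have the general shape
\[
\ket{\psi_x} \;=\; V \cdot O_x \cdot U \cdot O_x \cdot W \ket{\text{init}},
\]
followed by a projective measurement in a fixed orthonormal basis whose outcomes I partition into three groups corresponding to $|x| \bmod 3 \in \cbra{0,1,2}$.

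The design goal is that for each residue $r \in \cbra{0,1,2}$, all inputs $x$ with $|x|\equiv r \pmod{3}$ produce the same final state $\ket{\psi_x}$ up to a global phase, and that the three resulting states are mutually orthogonal, so that the final measurement distinguishes them with certainty. To narrow the search I exploit the permutation symmetry of the target function: since $|x| \bmod 3$ is invariant under the $S_3$ action on the three input bits, I may assume the algorithm is equivariant, i.e.\ $W,U,V$ commute with the corresponding $S_3$ action on $\cbra{\ket{1},\ket{2},\ket{3}}$. Under this restriction $\mathbb{C}^3$ decomposes into the one-dimensional trivial representation spanned by $\frac{1}{\sqrt{3}}(\ket{1}+\ket{2}+\ket{3})$ and a two-dimensional standard representation, which shrinks the design space to a small number of free parameters. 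A natural guess is to base $U$ and $W$ on the Fourier transform over $\mathbb{Z}_3$, exploiting the identity $\omega^{|x|} = \prod_i \omega^{x_i}$ with $\omega = e^{2\pi i/3}$; an auxiliary qubit is useful to reconcile the $\pm 1$ phases delivered by $O_x$ with the cube roots of unity appearing in this target.

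Once the unitaries $W, U, V$ are fixed, correctness is a finite verification: compute $\ket{\psi_x}$ for each of the $8$ inputs $x \in \zone^3$, check that inputs with the same value of $|x| \bmod 3$ give the same state up to a global phase, and verify that the three distinct resulting states are pairwise orthogonal so that the chosen measurement returns the residue with probability~$1$.

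The main obstacle is writing down the correct intermediate unitary. A $2$-query algorithm's amplitudes are polynomials of degree at most $2$ in the signed variables $(-1)^{x_i}$, while the indicator function of $|x| \equiv 0 \pmod 3$ (and similarly for the other residues) has multilinear polynomial degree exactly $3$; the construction must therefore saturate the polynomial-method constraints. Finding a clean algebraic choice of $W,U,V$ that achieves this, and doing so in a form that slots neatly into the iterative block structure used in the proof of Theorem~\ref{thm: main upper bound}, is the crux of the argument.
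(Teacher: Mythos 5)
There is a genuine gap: your proposal is a search plan rather than a proof. The entire content of the lemma is the \emph{existence} of suitable unitaries, and you explicitly defer this (``the main obstacle is writing down the correct intermediate unitary\dots is the crux of the argument''). The ansatz you describe --- a sandwich $V\,O_x\,U\,O_x\,W\ket{\mathrm{init}}$ with the oracle conjugated by the $\mathbb{Z}_3$ Fourier transform so that one query contributes phases built from $\omega^{x_i}$, followed by a finite verification over all $8$ inputs --- is exactly the shape of the paper's construction (which takes $W = U_0 = \QFT^\dagger$ absorbed into the query, works in a $5$-dimensional space, and exhibits explicit $5\times 5$ unitaries $U$ and $V$ verified input-by-input in a table). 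But without the actual matrices nothing has been proved, and it is far from obvious a priori that the design space you describe is nonempty.

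Two further points. First, your design goal is strictly stronger than necessary and is \emph{not} what the paper's algorithm achieves: you ask that all inputs in the same residue class reach the same final state up to global phase, with the three states mutually orthogonal. The paper only arranges that the final state lands in one of three fixed mutually orthogonal subspaces of dimensions $1$, $2$, $2$; e.g.\ the final states for $x=100$ and $x=010$ have inner product $-1/2$, so they are genuinely different vectors in the same $2$-dimensional subspace. Insisting on identical final states per class shrinks your search space and may well make it empty; you should relax the requirement to a projective measurement onto orthogonal subspaces. Second, your degree claim is off: the indicator of $|x|\equiv 0 \pmod 3$ on $\zone^3$ is $1-\sum_i x_i+\sum_{i<j}x_ix_j$, which has degree $2$, not $3$ (the residue-$1$ and residue-$2$ indicators do have degree $3$). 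This does not invalidate the approach, but the ``saturation'' heuristic you draw from it is based on a miscalculation.
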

To the best of our knowledge, such an algorithm was not known prior to our work.

\begin{proof}
The algorithm will use a $5$-dimensional state space. The oracle $O_x$ accessing the input string $x \in \zone^3$ is given by
\[O_x = \begin{bmatrix}
    (-1)^{x_1} & 0 & 0 & 0 & 0 \\
    0 & (-1)^{x_2} & 0 & 0 & 0 \\
    0 & 0 & (-1)^{x_3} & 0 & 0 \\
    0 & 0 & 0 & 1 & 0 \\
    0 & 0 & 0 & 0 & 1
\end{bmatrix}.\]
Furthermore, let $\omega = e^{\frac{2\pi i}{3}}$ and
\begin{align*}
    \QFT &= \frac{1}{\sqrt{3}}\begin{bmatrix}
    	1 & 1 & 1 & 0 & 0 \\
    	1 & \omega & \omega^2 & 0 & 0 \\
    	1 & \omega^2 & \omega & 0 & 0 \\
    	0 & 0 & 0 & 1 & 0 \\
    	0 & 0 & 0 & 0 & 1
    \end{bmatrix}
\end{align*}
denote the $3 \times 3$ quantum Fourier transform acting only on the first three basis states.
The two non-query operations are given by
\begin{align*}
    U &= \frac14\begin{bmatrix}
		4 & 0 & 0 & 0 & 0 \\
		0 & -\frac{1}{2} + i\frac{3\sqrt{3}}{2} & 0 & 3 & 0 \\
		0 & 0 & -\frac{1}{2} - i\frac{3\sqrt{3}}{2} & 0 & 3 \\
		0 & 3 & 0 & \frac{1}{2} + i\frac{3\sqrt{3}}{2} & 0 \\
		0 & 0 & 3 & 0 & \frac{1}{2} - i\frac{3\sqrt{3}}{2}
	\end{bmatrix}
\end{align*}
and
\begin{align*}
    V &= \frac{1}{\sqrt{2}}\begin{bmatrix}
		\sqrt{2} & 0 & 0 & 0 & 0 \\
		0 & \frac{1}{2} - i\frac{\sqrt{3}}{2} & 0 & 1 & 0 \\
		0 & 0 & \frac{1}{2} + i\frac{\sqrt{3}}{2} & 0 & 1 \\
		0 & 1 & 0 & - \frac{1}{2} - i\frac{\sqrt{3}}{2} & 0 \\
		0 & 0 & 1 & 0 & - \frac{1}{2} + i\frac{\sqrt{3}}{2}
	\end{bmatrix}.
\end{align*}
One may check by inspection that $U$ and $V$ are unitary.
We denote the computational basis of the underlying 5-dimensional space by $\{\ket{0}, \ket{1}, \dotsc, \ket{4}\}$, and define orthogonal subspaces 
\begin{align*}
    S_0 & = \sp\cbra{\ket{0}},\\
    S_1 & = \sp\cbra{\ket{1},\ket{2}},\\
    S_2 & = \sp\cbra{\ket{3},\ket{4}},
\end{align*}
corresponding to values of the Hamming weight modulo 3.
Let $\Pi_i$ denote the projector onto $S_i$, for each $i \in \cbra{0, 1, 2}$.
Our algorithm prepares the state
\begin{equation}
\ket{\psi(x)} =
V (\QFT^{\dagger}O_x\QFT) U (\QFT^{\dagger}O_x\QFT) \ket{0},
\label{eq:psi}
\end{equation}
and outputs the result obtained by measuring this state with respect to the projectors $\{\Pi_0, \Pi_1, \Pi_2\}$. The query complexity of this algorithm is 2.
For convenience, we list in Appendix~\ref{app: intermediate states} the intermediate states of the algorithm for all inputs $x \in \zone^3$.
Observe from the table in Appendix~\ref{app: intermediate states} that $\ket{\psi(x)} \in S_i$ iff $|x| \equiv i \pmod 3$, showing the correctness of our algorithm.
\end{proof}

\begin{proof}[Proof of Theorem~\ref{thm: main upper bound}]
We prove this by induction on~$m$.

\textbf{Base case:} First, suppose that $m = 2$. We divide $[n]$ into pairs, run the algorithm from Lemma~\ref{lemma: mod2 2 bits} on each of the pairs separately, and simply query the one remaining bit if $n$ is odd. Since $x$ is constant on all pairs that give outcome $0$, we group them into $S_1$ and remark that $|x_{S_1}|$ indeed is a multiple of $2$. Furthermore, we let $S_2 = [n] \setminus S_1$, and observe that we can now easily calculate $|x_{S_2}|$ from the measurement outcomes. The number of queries used is indeed $\lceil n/2\rceil = \lceil n(1-1/m)\rceil$. This completes the base case for $m = 2$. The base case for $m=3$ follows analogously using Lemma~\ref{lemma: mod3 3 bits}.

\textbf{Inductive step:} Let $m \geq 4$ and suppose that the statement holds for all $m' < m$. The modulus $m = 2^i 3^j$ with $i+j > 1$ cannot be prime, so we can factor it as $m = m_1m_2$ such that $m_1 < m$ and $m_2 < m$. We perform the following procedure (see Figure~\ref{fig:alg}):
\begin{enumerate}
    \item Run the algorithm from our induction hypothesis with mod $m_1$, making at most $\lceil n(1-1/m_1) \rceil$ queries. This returns a partition $S_1' \cup S_2' = [n]$, a further partition of $S_1' = B_1 \cup \cdots \cup B_{\ell}$ into blocks of size $m_1$ such that $x$ is constant on each of these blocks, and an integer $w$ such that $|x_{S_2'}| = w$. Note that $\ell \leq \lfloor n/m_1\rfloor$.
    \item Since $x$ is constant on each $B_j$, we choose an arbitrary element $a_j \in B_j$, for all $j \in [\ell]$, and let $A = \{a_j : j \in [\ell]\} \subseteq [n]$. Run the algorithm from our induction hypothesis on $x_A$, with mod $m_2$, costing at most $\lceil \ell(1-1/m_2) \rceil$ queries. This returns a partition $A_1' \cup A_2' = A$ and a further partition of $A_1' = B_1' \cup \cdots \cup B_{\ell'}'$ into blocks of size $m_2$ such that $x$ is constant on each of these blocks, and an integer $w'$ such that $|x_{A_2'}| = w'$.
    \item For $i \in [\ell']$, let $L_i := \cbra{j \in [\ell] : a_j \in B'_{i}}$. By the construction from the previous step, $|L_i| = m_2$ for all $i \in [\ell']$. For every block $B_i' = \{a_j : j \in L_i\}$, define $C_i = \cup_{j \in L_i} B_j$ (i.e., $C_i$ represents all indices in $[n]$ that ``contribute'' to $B'_{i}$). Since $x$ is constant on each $B_i'$ and on each $B_j$, and since each element of $B_i'$ represents an entire $B_j$, we conclude that $x$ must also be constant on the whole $C_i$. Moreover, since $C_i$ is a disjoint union of $|L_i| = m_2$ sets, each of size $|B_j| = m_1$, we have $|C_i| = m_1m_2 = m$. Group all these $C_i$'s into a set~$S_1$, and notice that $|x_{S_1}|\equiv 0$ mod~$m$. Let $S_2 = [n] \setminus S_1$ denote the rest. If we denote $M := \cbra{j \in [\ell] : a_j \in A'_2}$, then we can write $S_2$ as $S_2' \cup (\cup_{j \in M} B_j)$, and hence
    \[|x_{S_2}| = |x_{S_2'}| + \sum_{j \in M} |x_{B_j}| = w + \sum_{a_j \in A_2'} |x_{a_j}| \cdot m_1 = w + w' m_1.\]
\end{enumerate}
Since $\ell \leq \lfloor n/m_1\rfloor$, the total number of queries of the entire algorithm is at most
\begin{align*}
    \left\lceil n\left(1-\frac{1}{m_1}\right)\right\rceil + \left\lceil \left\lfloor \frac{n}{m_1}\right\rfloor \left(1-\frac{1}{m_2}\right)\right\rceil &= n - \left\lfloor \frac{n}{m_1}\right\rfloor + \left\lfloor \frac{n}{m_1}\right\rfloor - \left\lfloor \left\lfloor \frac{n}{m_1} \right\rfloor \frac{1}{m_2} \right\rfloor = n - \left\lfloor \frac{n}{m_1m_2}\right\rfloor \\
    &= \left\lceil n - \frac{n}{m_1m_2}\right\rceil = \left\lceil n\left(1 - \frac{1}{m_1m_2}\right)\right\rceil,
\end{align*}
where we used that for every three integers $a,b,c > 0$, it holds that $\lfloor \lfloor a/b \rfloor /c \rfloor = \lfloor a/(bc) \rfloor$. Since $m=m_1m_2$, this completes the inductive step.
\end{proof}

\section{Lower bound}

In this section, we show that the upper bound in Theorem~\ref{thm: intro main upper bound} is tight. We do this by showing Theorem~\ref{thm: intro main lower bound}, which is a stronger lower bound: the exact query complexity of determining whether the Hamming weight of an $n$-bit string is 0 modulo $m$ or not, is at least $\left\lceil n\rbra{1 - 1/m} \right\rceil$. Moreover, our bound works in the stronger \emph{non-deterministic} model. Along with Theorem~\ref{thm: intro main upper bound}, this implies the lower bound in Theorem~\ref{thm: intro main} since computing the Hamming weight of an $n$-bit string modulo $m$ is at least as hard as deciding whether its Hamming weight is 0 modulo $m$. We first show a more general statement.

\begin{theorem}\label{thm: lower bound numberof0s}
Let $f: \zone^n \to \zone$ be a symmetric Boolean function such that $f(0^n) = 1$. Then,
\[
\QN(f) \geq \abs{\cbra{i \in [n] : f(x) = 0~\textnormal{if}~|x| = i}}.
\]
\end{theorem}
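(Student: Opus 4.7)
The plan is to combine the two tools stated earlier in the paper: the lower bound $\QN(f) \geq \ndeg(f)$ from Theorem~\ref{thm: qn ndeg}, and the Minsky--Papert symmetrization of Theorem~\ref{thm: mp}. Concretely, I would take any non-deterministic polynomial $p : \R^n \to \C$ for $f$ achieving the minimum degree $\ndeg(f)$, symmetrize it to a univariate $q : \R \to \C$ of the same or smaller degree, and then count zeros of $q$.

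First, let $T := \cbra{i \in [n] : f(x) = 0 \text{ if } |x| = i}$, so the claim is $\QN(f) \geq |T|$. By Theorem~\ref{thm: qn ndeg} it suffices to show $\ndeg(f) \geq |T|$. Pick an $n$-variate multilinear $p$ realizing $\ndeg(f)$, and apply Theorem~\ref{thm: mp} to obtain a univariate $q$ with $\deg(q) \leq \deg(p)$ and $q(k) = \E_{x \in \zone^n : |x| = k}[p(x)]$ for every $k \in \{0,1,\dots,n\}$.

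Next I would argue that $q$ vanishes on every $i \in T$: since $f$ is symmetric, $f$ is identically $0$ on the Hamming slice $\{x : |x| = i\}$ for $i \in T$; by the non-deterministic condition $p$ must then also vanish on that entire slice, so the averaged value $q(i)$ is $0$. I would also observe that $q$ is \emph{not} the zero polynomial: $q(0) = p(0^n)$, and since $f(0^n) = 1$ the non-deterministic condition forces $p(0^n) \neq 0$. A nonzero univariate polynomial over $\C$ has at most $\deg(q)$ roots, so $\deg(q) \geq |T|$, and hence $\ndeg(f) = \deg(p) \geq \deg(q) \geq |T|$, finishing the proof via Theorem~\ref{thm: qn ndeg}.

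There is no real obstacle here; the only point worth being careful about is that the ambient field is $\C$ (since $p$ is complex-valued), but the fundamental theorem of algebra still bounds the number of roots of a nonzero polynomial by its degree, so the counting argument goes through unchanged. The hypothesis $f(0^n) = 1$ is used precisely to guarantee that $q \not\equiv 0$, which is what converts ``$q$ has $|T|$ zeros'' into ``$\deg(q) \geq |T|$''.
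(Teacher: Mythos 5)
Your proposal is correct and follows exactly the same route as the paper: symmetrize a minimum-degree non-deterministic polynomial via Theorem~\ref{thm: mp}, use $f(0^n)=1$ to get $q(0)=p(0^n)\neq 0$ so that $q\not\equiv 0$, count the roots of $q$ at the weights where $f$ vanishes, and finish with Theorem~\ref{thm: qn ndeg}. Your remark about the root bound holding over $\C$ is a fine (and correct) point of care that the paper leaves implicit.
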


\begin{proof}
Let $p : \R^n \to \C$ be a non-deterministic polynomial for $f$. Theorem~\ref{thm: mp} implies existence of a polynomial $q : \R \to \C$ with $\deg(q) \leq \deg(p)$ that satisfies the following properties:
\begin{itemize}
    \item $q(0) = p(0^n) \neq 0$, since $f(0^n) = 1$,
    \item if $p(x) = 0$ and $|x| = i \in [n]$ then
    \[
    q(i) = \E_{x \in \zone^n : |x| = i} [p(x)] = 0.
    \]
\end{itemize}
Thus $q$ is a univariate non-constant polynomial that has at least as many roots as the number of Hamming weights on which $f$ outputs 0. Since the degree of a non-zero polynomial is at least as large as its number of roots,
\[
\ndeg(f) \geq \deg(p) \geq \deg(q) \geq \abs{\cbra{i \in [n] : f(x) = 0~\textnormal{if}~|x| = i}},
\]
which completes the proof after applying Theorem~\ref{thm: qn ndeg}.
\end{proof}

Theorem~\ref{thm: intro main lower bound} immediately follows, because the  $\MOD_m$ function takes value 1 on exactly $1+\lfloor n/m\rfloor$ of the $n+1$ possible Hamming weights, and hence is 0 on the other $n+1-(1+\lfloor n/m\rfloor)=\left\lceil n\rbra{1 - \frac{1}{m}}\right\rceil$ Hamming weights.

\section{Discussion}\label{sec:discussion}

We showed how to recover the Hamming weight of a bit string $x \in \{0,1\}^n$, modulo some integer $m$, where $m$ only has prime factors $2$ and $3$. The core building blocks we used are the algorithms from Lemmas~\ref{lemma: mod2 2 bits}~and~\ref{lemma: mod3 3 bits}, which solve the problem in the case where $n = m = 2$ and $n = m = 3$, respectively. We could resolve Conjecture~\ref{conj:main conjecture} via the same recursive proof method as in Theorem~\ref{thm: main upper bound} if we had algorithms for $n = m = p$ for all primes~$p$. Hence, a natural follow-up question is whether we can construct a 4-query quantum algorithm to compute $|x|$ mod $5$ for all  $x\in\zone^5$. 

\bibliography{bibo}

\newcommand{\etalchar}[1]{$^{#1}$}
\begin{thebibliography}{CEMM98}

\bibitem[AGZ15]{AGZ15}
Andris Ambainis, Jozef Gruska, and Shenggen Zheng.
\newblock Exact quantum algorithms have advantage for almost all {B}oolean
  functions.
\newblock {\em Quantum Information and Computation}, 15(5{\&}6):435--452, 2015.
\newblock \href {http://arxiv.org/abs/1404.1684} {\path{arXiv:1404.1684}},
  \href {https://doi.org/10.26421/QIC15.5-6-5}
  {\path{doi:10.26421/QIC15.5-6-5}}.

\bibitem[AIN17]{AIN17}
Andris Ambainis, Jānis Iraids, and Daniel Nagaj.
\newblock Exact quantum query complexity of $\mathrm{EXACT}_{k,l}^n$.
\newblock In Bernhard Steffen, Christel Baier, Mark van~den Brand, Johann Eder,
  Mike Hinchey, and Tiziana Margaria, editors, {\em Proceedings of
  International Conference on Current Trends in Theory and Practice of Computer
  Science, {SOFSEM} 2017}, volume 10139 of {\em Lecture Notes in Computer
  Science}, pages 243--255. Springer, 2017.
\newblock \href {http://arxiv.org/abs/1608.02374} {\path{arXiv:1608.02374}},
  \href {https://doi.org/10.1007/978-3-319-51963-0_19}
  {\path{doi:10.1007/978-3-319-51963-0_19}}.

\bibitem[AIS13]{AIS13}
Andris Ambainis, Jānis Iraids, and Juris Smotrovs.
\newblock Exact quantum query complexity of {EXACT} and {THRESHOLD}.
\newblock In Simone Severini and Fernando Brand{\~{a}}o, editors, {\em
  Proceedings of Theory of Quantum Computation, Communication and Cryptography,
  {TQC} 2013}, volume~22 of {\em LIPIcs}, pages 263--269. Schloss
  Dagstuhl--Leibniz-Zentrum für Informatik, 2013.
\newblock \href {http://arxiv.org/abs/1302.1235} {\path{arXiv:1302.1235}},
  \href {https://doi.org/10.4230/LIPIcs.TQC.2013.263}
  {\path{doi:10.4230/LIPIcs.TQC.2013.263}}.

\bibitem[Amb07]{ambainis:edj}
Andris Ambainis.
\newblock Quantum walk algorithm for element distinctness.
\newblock {\em SIAM Journal on Computing}, 37(1):210--239, 2007.
\newblock Earlier version in FOCS'04.
\newblock \href {http://arxiv.org/abs/quant-ph/0311001}
  {\path{arXiv:quant-ph/0311001}}, \href
  {https://doi.org/10.1137/S0097539705447311}
  {\path{doi:10.1137/S0097539705447311}}.

\bibitem[BBC{\etalchar{+}}01]{bbcmw:polynomialsj}
Robert Beals, Harry Buhrman, Richard Cleve, Michele Mosca, and Ronald~de Wolf.
\newblock Quantum lower bounds by polynomials.
\newblock {\em Journal of the ACM}, 48(4):778--797, 2001.
\newblock Earlier version in FOCS'98.
\newblock \href {http://arxiv.org/abs/quant-ph/9802049}
  {\path{arXiv:quant-ph/9802049}}, \href
  {https://doi.org/10.1145/502090.502097} {\path{doi:10.1145/502090.502097}}.

\bibitem[CEMM98]{cemm:revisited}
Richard Cleve, Artur Ekert, Chiara Macchiavello, and Michele Mosca.
\newblock Quantum algorithms revisited.
\newblock {\em Proceedings of the Royal Society of London A}, 454:339--354,
  1998.
\newblock \href {http://arxiv.org/abs/quant-ph/9708016}
  {\path{arXiv:quant-ph/9708016}}, \href
  {https://doi.org/10.1098/rspa.1998.0164} {\path{doi:10.1098/rspa.1998.0164}}.

\bibitem[Deu85]{deutsch:uqc}
David Deutsch.
\newblock Quantum theory, the {C}hurch–{T}uring principle and the universal
  quantum computer.
\newblock {\em Proceedings of the Royal Society of London A}, 400:97--117,
  1985.
\newblock URL: \url{https://www.daviddeutsch.org.uk/wp-content/deutsch85.pdf},
  \href {https://doi.org/10.1098/rspa.1985.0070}
  {\path{doi:10.1098/rspa.1985.0070}}.

\bibitem[DJ92]{deutsch&jozsa}
David Deutsch and Richard Jozsa.
\newblock Rapid solution of problems by quantum computation.
\newblock {\em Proceedings of the Royal Society of London A}, 439:553--558,
  1992.
\newblock \href {https://doi.org/10.1098/rspa.1992.0167}
  {\path{doi:10.1098/rspa.1992.0167}}.

\bibitem[FGGS98]{fggs:parity}
Edward Farhi, Jeffrey Goldstone, Sam Gutmann, and Michael Sipser.
\newblock Limit on the speed of quantum computation in determining parity.
\newblock {\em Physical Review Letters}, 81(24):5442--5444, Dec 1998.
\newblock \href {http://arxiv.org/abs/quant-ph/9802045}
  {\path{arXiv:quant-ph/9802045}}, \href
  {https://doi.org/10.1103/PhysRevLett.81.5442}
  {\path{doi:10.1103/PhysRevLett.81.5442}}.

\bibitem[GR97]{GR97}
Joachim von~zur Gathen and James~R. Roche.
\newblock Polynomials with two values.
\newblock {\em Combinatorica}, 17(3):345--362, 1997.
\newblock \href {https://doi.org/10.1007/BF01215917}
  {\path{doi:10.1007/BF01215917}}.

\bibitem[Gro96]{grover:search}
Lov~K. Grover.
\newblock A fast quantum mechanical algorithm for database search.
\newblock In {\em Proceedings of 28th ACM STOC}, pages 212--219, 1996.
\newblock \href {http://arxiv.org/abs/quant-ph/9605043}
  {\path{arXiv:quant-ph/9605043}}, \href
  {https://doi.org/10.1145/237814.237866} {\path{doi:10.1145/237814.237866}}.

\bibitem[MNRS11]{mnrs:searchwalk}
Frédéric Magniez, Ashwin Nayak, Jérémie Roland, and Miklos Santha.
\newblock Search via quantum walk.
\newblock {\em SIAM Journal on Computing}, 40(1):142--164, 2011.
\newblock Earlier version in STOC'07.
\newblock \href {http://arxiv.org/abs/quant-ph/0608026}
  {\path{arXiv:quant-ph/0608026}}, \href {https://doi.org/10.1137/090745854}
  {\path{doi:10.1137/090745854}}.

\bibitem[MP69]{minsky&papert:perceptrons}
Marvin Minsky and Seymour~A. Papert.
\newblock {\em Perceptrons}.
\newblock MIT Press, Cambridge, MA, 1969.
\newblock URL: \url{https://books.google.com/books?id=PLQ5DwAAQBAJ}.

\bibitem[NC00]{nielsen&chuang:qc}
Michael~A. Nielsen and Isaac~L. Chuang.
\newblock {\em Quantum Computation and Quantum Information}.
\newblock Cambridge University Press, 2000.
\newblock URL: \url{https://books.google.com/books?id=-s4DEy7o-a0C}.

\bibitem[Sho97]{shor:factoring}
Peter~W. Shor.
\newblock Polynomial-time algorithms for prime factorization and discrete
  logarithms on a quantum computer.
\newblock {\em SIAM Journal on Computing}, 26(5):1484--1509, 1997.
\newblock Earlier version in FOCS'94.
\newblock \href {http://arxiv.org/abs/quant-ph/9508027}
  {\path{arXiv:quant-ph/9508027}}, \href
  {https://doi.org/10.1137/S0097539795293172}
  {\path{doi:10.1137/S0097539795293172}}.

\bibitem[Sim97]{simon:power}
Daniel~R. Simon.
\newblock On the power of quantum computation.
\newblock {\em SIAM Journal on Computing}, 26(5):1474--1483, 1997.
\newblock Earlier version in FOCS'94.
\newblock \href {https://doi.org/10.1137/S0097539796298637}
  {\path{doi:10.1137/S0097539796298637}}.

\bibitem[Wol03]{wolf:nqj}
Ronald~de Wolf.
\newblock Nondeterministic quantum query and communication complexities.
\newblock {\em SIAM Journal on Computing}, 32(3):681--699, 2003.
\newblock Combines earlier versions in CCC'00 and STACS'02.
\newblock \href {http://arxiv.org/abs/cs/0001014} {\path{arXiv:cs/0001014}},
  \href {https://doi.org/10.1137/S0097539702407345}
  {\path{doi:10.1137/S0097539702407345}}.

\bibitem[Wol08]{wolf:degreesymmf}
Ronald~{de} Wolf.
\newblock A note on quantum algorithms and the minimal degree of
  {$\varepsilon$}-error polynomials for symmetric functions.
\newblock {\em Quantum Information and Computation}, 8(10):943--950, 2008.
\newblock \href {http://arxiv.org/abs/0802.1816} {\path{arXiv:0802.1816}},
  \href {https://doi.org/10.26421/QIC8.10-4} {\path{doi:10.26421/QIC8.10-4}}.

\bibitem[Wol19]{LECTURENOTES}
Ronald~de Wolf.
\newblock Quantum computing: Lecture notes, 2019.
\newblock \href {http://arxiv.org/abs/1907.09415} {\path{arXiv:1907.09415}}.

\end{thebibliography}

\appendix

\section{Intermediate states of our modulo-$3$ algorithm}\label{app: intermediate states}

For a given input $x \in \zone^3$, denote the intermediate states of our algorithm of Equation~\eqref{eq:psi} by
\begin{align*}
  \ket{\psi_1(x)} &= \widetilde{O}_x \ket{0}, \\
  \ket{\psi_2(x)} &= U \ket{\psi_1(x)}, \\
  \ket{\psi_3(x)} &= \widetilde{O}_x \ket{\psi_2(x)}, \\
  \ket{\psi_4(x)} &= V\ket{\psi_3(x)},
\end{align*}
where $\widetilde{O}_x = \QFT^{\dagger} O_x \QFT$.
Explicit expressions of these states are provided in Table~\ref{table: states}.
Curiously, the Gram matrix $G_{x,y} = \braket{\psi_4(x)}{\psi_4(y)}$ of the final states has a particularly elegant form, which we mention here with a view towards possible generalization to prime moduli 5 and higher:
\[
G = \frac{1}{2}
\begin{bmatrix}
 2 & 0 & 0 & 0 & 0 & 0 & 0 & 2 \\
 0 & 2 & -1 & 0 & -1 & 0 & 0 & 0 \\
 0 & -1 & 2 & 0 & -1 & 0 & 0 & 0 \\
 0 & 0 & 0 & 2 & 0 & -1 & -1 & 0 \\
 0 & -1 & -1 & 0 & 2 & 0 & 0 & 0 \\
 0 & 0 & 0 & -1 & 0 & 2 & -1 & 0 \\
 0 & 0 & 0 & -1 & 0 & -1 & 2 & 0 \\
 2 & 0 & 0 & 0 & 0 & 0 & 0 & 2
\end{bmatrix}.
\]
Letting $a_i = (-1)^{x_i}$ and $b_i = (-1)^{y_i}$, by a direct calculation we can obtain the following explicit formula for the entries of $G$ as a function of $a,b \in \{-1,1\}^3$:
\begin{align*}
    48 G_{a,b}
    &= 6 (a_1 b_1 + a_2 b_2 + a_3 b_3)
    + (a_1^2 b_1^2 + a_2^2 b_2^2 + a_3^2 b_3^2)
    - 3 (a_1 b_2 + a_1 b_3 + a_2 b_1 + a_2 b_3 + a_3 b_1 + a_3 b_2) \\
    &\quad + 3 (a_1^2 b_1 b_2 + a_2^2 b_2 b_3 + a_3^2 b_1 b_3 + a_1 a_2 b_1^2 + a_2 a_3 b_2^2 + a_1 a_3 b_3^2) \\
    &\quad + 9 (a_1 a_2 b_1 b_2 + a_2 a_3 b_2 b_3 + a_1 a_3 b_1 b_3).
\end{align*}
Using the fact that $a_i^2 = b_i^2 = 1$, we can simplify it to
\begin{align*}
    16 G_{a,b} = 1
   &+   (a_1 a_2 + a_1 a_3 + a_2 a_3)
    +   (b_1 b_2 + b_1 b_3 + b_2 b_3)
    + 2 (a_1 b_1 + a_2 b_2 + a_3 b_3) \\
   &+ 3 (a_1 a_2 b_1 b_2 + a_1 a_3 b_1 b_3 + a_2 a_3 b_3 b_2)
    -   (a_1 b_2 + a_1 b_3 + a_2 b_1 + a_2 b_3 + a_3 b_1 + a_3 b_2).
\end{align*}

\begin{table}[h!]
\scalebox{0.75}{
	\begin{tabular}{r|cccccccc}
		State on input $x$ & $000$ & $100$ & $010$ & $001$ & $011$ & $101$ & $110$ & $111$ \\\hline
		
		$\ket{\psi_1(x)} = \widetilde{O}_x \ket{0}$ & $\begin{bmatrix}
		1 \\
		0 \\
		0 \\
		0 \\
		0
		\end{bmatrix}$ & $\begin{bmatrix}
		\frac13 \\
		-\frac23 \\
		-\frac23 \\
		0 \\
		0
		\end{bmatrix}$ & $\begin{bmatrix}
		\frac13 \\
		\frac13+\frac{\sqrt{3}}{3}i \\
		\frac13-\frac{\sqrt{3}}{3}i \\
		0 \\
		0
		\end{bmatrix}$ & $\begin{bmatrix}
		\frac13 \\
		\frac13-\frac{\sqrt{3}}{3}i \\
		\frac13+\frac{\sqrt{3}}{3}i \\
		0 \\
		0
		\end{bmatrix}$ & $\begin{bmatrix}
		-\frac13 \\
		\frac23 \\
		\frac23 \\
		0 \\
		0
		\end{bmatrix}$ & $\begin{bmatrix}
		-\frac13 \\
		-\frac13-\frac{\sqrt{3}}{3}i \\
		-\frac13+\frac{\sqrt{3}}{3}i \\
		0 \\
		0
		\end{bmatrix}$ & $\begin{bmatrix}
		-\frac13 \\
		-\frac13+\frac{\sqrt{3}}{3}i \\
		-\frac13-\frac{\sqrt{3}}{3}i \\
		0 \\
		0
		\end{bmatrix}$ & $\begin{bmatrix}
		-1 \\
		0 \\
		0 \\
		0 \\
		0
		\end{bmatrix}$ \\
		
		$\ket{\psi_2(x)} = U\ket{\psi_1(x)}$ & $\begin{bmatrix}
		1 \\
		0 \\
		0 \\
		0 \\
		0
		\end{bmatrix}$ & $\begin{bmatrix}
		\frac13 \\
		\frac{1}{12}-\frac{\sqrt{3}}{4}i \\
		\frac{1}{12}+\frac{\sqrt{3}}{4}i \\
		-\frac12 \\
		-\frac12
		\end{bmatrix}$ & $\begin{bmatrix}
		\frac13 \\
		-\frac{5}{12}+\frac{\sqrt{3}}{12}i \\
		-\frac{5}{12}-\frac{\sqrt{3}}{12}i \\
		\frac14+\frac{\sqrt{3}}{4}i \\
		\frac14-\frac{\sqrt{3}}{4}i
		\end{bmatrix}$ & $\begin{bmatrix}
		\frac13 \\
		\frac13+\frac{\sqrt{3}}{6}i \\
		\frac13-\frac{\sqrt{3}}{6}i \\
		\frac14-\frac{\sqrt{3}}{4}i \\
		\frac14+\frac{\sqrt{3}}{4}i
		\end{bmatrix}$ & $\begin{bmatrix}
		-\frac13 \\
		-\frac{1}{12}+\frac{\sqrt{3}}{4}i \\
		-\frac{1}{12}-\frac{\sqrt{3}}{4}i \\
		\frac12 \\
		\frac12
		\end{bmatrix}$ & $\begin{bmatrix}
		-\frac13 \\
		\frac{5}{12}-\frac{\sqrt{3}}{12}i \\
		\frac{5}{12}+\frac{\sqrt{3}}{12}i \\
		-\frac14-\frac{\sqrt{3}}{4}i \\
		-\frac14+\frac{\sqrt{3}}{4}i
		\end{bmatrix}$ & $\begin{bmatrix}
		-\frac13 \\
		-\frac13-\frac{\sqrt{3}}{6}i \\
		-\frac13+\frac{\sqrt{3}}{6}i \\
		-\frac14+\frac{\sqrt{3}}{4}i \\
		-\frac14-\frac{\sqrt{3}}{4}i
		\end{bmatrix}$ & $\begin{bmatrix}
		-1 \\
		0 \\
		0 \\
		0 \\
		0
		\end{bmatrix}$ \\
		
		$\ket{\psi_3(x)} = \widetilde{O}_x \ket{\psi_2(x)}$ & $\begin{bmatrix}
		1 \\
		0 \\
		0 \\
		0 \\
		0
		\end{bmatrix}$ & $\begin{bmatrix}
		0 \\
		-\frac14-\frac{\sqrt{3}}{4}i \\
		-\frac14+\frac{\sqrt{3}}{4}i \\
		-\frac12 \\
		-\frac12
		\end{bmatrix}$ & $\begin{bmatrix}
		0 \\
		-\frac14+\frac{\sqrt{3}}{4}i \\
		-\frac14-\frac{\sqrt{3}}{4}i \\
		\frac14+\frac{\sqrt{3}}{4}i \\
		\frac14-\frac{\sqrt{3}}{4}i
		\end{bmatrix}$ & $\begin{bmatrix}
		0 \\
		\frac12 \\
		\frac12 \\
		\frac14-\frac{\sqrt{3}}{4}i \\
		\frac14+\frac{\sqrt{3}}{4}i
		\end{bmatrix}$ & $\begin{bmatrix}
		0 \\
		-\frac14-\frac{\sqrt{3}}{4}i \\
		-\frac14+\frac{\sqrt{3}}{4}i \\
		\frac12 \\
		\frac12
		\end{bmatrix}$ & $\begin{bmatrix}
		0 \\
		-\frac14+\frac{\sqrt{3}}{4}i \\
		-\frac14-\frac{\sqrt{3}}{4}i \\
		-\frac14-\frac{\sqrt{3}}{4}i \\
		-\frac14+\frac{\sqrt{3}}{4}i
		\end{bmatrix}$ & $\begin{bmatrix}
		0 \\
		\frac12 \\
		\frac12 \\
		-\frac14+\frac{\sqrt{3}}{4}i \\
		-\frac14-\frac{\sqrt{3}}{4}i
		\end{bmatrix}$ & $\begin{bmatrix}
		1 \\
		0 \\
		0 \\
		0 \\
		0
		\end{bmatrix}$ \\
		
		$\ket{\psi_4(x)} = V\ket{\psi_3(x)}$ & $\begin{bmatrix}
		1 \\
		0 \\
		0 \\
		0 \\
		0
		\end{bmatrix}$ & $\begin{bmatrix}
		0 \\
		-\frac{\sqrt{2}}{2} \\
		-\frac{\sqrt{2}}{2} \\
		0 \\
		0
		\end{bmatrix}$ & $\begin{bmatrix}
		0 \\
		\frac{\sqrt{2}}{4}+\frac{\sqrt{6}}{4}i \\
		\frac{\sqrt{2}}{4}-\frac{\sqrt{6}}{4}i \\
		0 \\
		0
		\end{bmatrix}$ & $\begin{bmatrix}
		0 \\
		\frac{\sqrt{2}}{4}-\frac{\sqrt{6}}{4}i \\
		\frac{\sqrt{2}}{4}+\frac{\sqrt{6}}{4}i \\
		0 \\
		0
		\end{bmatrix}$ & $\begin{bmatrix}
		0 \\
		0 \\
		0 \\
		-\frac{\sqrt{2}}{4}-\frac{\sqrt{6}}{4}i \\
		-\frac{\sqrt{2}}{4}+\frac{\sqrt{6}}{4}i
		\end{bmatrix}$ & $\begin{bmatrix}
		0 \\
		0 \\
		0 \\
		-\frac{\sqrt{2}}{4}+\frac{\sqrt{6}}{4}i \\
		-\frac{\sqrt{2}}{4}-\frac{\sqrt{6}}{4}i
		\end{bmatrix}$ & $\begin{bmatrix}
		0 \\
		0 \\
		0 \\
		\frac{\sqrt{2}}{2} \\
		\frac{\sqrt{2}}{2}
		\end{bmatrix}$ & $\begin{bmatrix}
		1 \\
		0 \\
		0 \\
		0 \\
		0
		\end{bmatrix}$
	\end{tabular}}
    \caption{\label{table: states}Intermediate states of our algorithm in Lemma~\ref{lemma: mod3 3 bits} for computing the Hamming weight modulo $3$. Recall from Equation~\eqref{eq:psi} that the oracle is in the Fourier basis: $\widetilde{O}_x = \QFT^{\dagger} O_x \QFT$.}
\end{table}


\end{document}